\newcommand{\tr}{\mathrm{ tr }}
\newcommand{\ie}{{\em i.e.} }
\newcommand{\ket}[1]{|#1\rangle}
\newcommand{\ketbra}[2]{|#1\rangle\langle#2|}
\newcommand{\fzz}{\mbox{$\ketbra{\Phi_0^{n-1}}{\Phi_0^{n-1}}$}}
\newcommand{\fzo}{\mbox{$\ketbra{\Phi_0^{n-1}}{\Phi_1^{n-1}}$}}
\newcommand{\foz}{\mbox{$\ketbra{\Phi_1^{n-1}}{\Phi_0^{n-1}}$}}
\newcommand{\foo}{\mbox{$\ketbra{\Phi_1^{n-1}}{\Phi_1^{n-1}}$}}
\newcommand{\fpp}{\mbox{$\ketbra{\Phi_+^{n-1}}{\Phi_+^{n-1}}$}}
\newcommand{\fmm}{\mbox{$\ketbra{\Phi_-^{n-1}}{\Phi_-^{n-1}}$}}
\newcommand{\altketbra}[1]{|#1\rangle \langle #1|}
\newcommand{\braket}[2]{\langle#1|   #2\rangle}
\newcommand{\COMMENT}[1]{}
\newtheorem{theorem}{Theorem}
\newtheorem{corollary}{Corollary}
\begin{document}

\preprint{APS/123-QED}

\title{Multipartite Entanglement Verification Resistant against Dishonest Parties}

\author{Anna Pappa}\affiliation{LTCI, CNRS - T\'el\'ecom ParisTech, Paris 75013, France}\affiliation{LIAFA, CNRS - Universit\'e Paris 7, Paris 75013, France}
\author{Andr\'e Chailloux}\affiliation{Computer Science Department, University of California, Berkeley, California 94720-1776, USA}
\author{Stephanie Wehner}\affiliation{Center for Quantum Technologies, National University of Singapore, Singapore 117543}
\author{Eleni Diamanti}\affiliation{LTCI, CNRS - T\'el\'ecom ParisTech, Paris 75013, France}
\author{Iordanis Kerenidis}\affiliation{LIAFA, CNRS - Universit\'e Paris 7, Paris 75013, France}\affiliation{Center for Quantum Technologies, National University of Singapore, Singapore 117543}

\date{\today}

\begin{abstract}
Future quantum information networks will consist of quantum and classical agents, who have the ability to communicate in a variety of ways with trusted and untrusted parties and securely delegate computational tasks to untrusted large-scale quantum computing servers. Multipartite quantum entanglement is a fundamental resource for such a network and hence it is imperative to study the possibility of verifying a multipartite entanglement source in a way that is efficient and provides strong guarantees even in the presence of multiple dishonest parties.
In this Letter, we show how an agent of a quantum network can perform a distributed verification of a source creating multipartite Greenberger-Horne-Zeilinger (GHZ) states with minimal resources, which is, nevertheless, resistant against any number of dishonest parties. Moreover, we provide a tight tradeoff between the level of security and the distance between the state produced by the source and the ideal GHZ state. Last, by adding the resource of a trusted common random source, we can further provide security guarantees for all honest parties in the quantum network simultaneously.
\end{abstract}

\pacs{03.65.Ud, 03.67.-a, 03.67.Dd}
\maketitle

Entanglement plays a key role in the study and development of quantum information theory. It has been widely used in all aspects of quantum information and has been essential to show the advantages obtained compared to the classical setting. Initially defined for bipartite states, the notion of entanglement has been generalized to multipartite systems and despite the complexity this notion acquires in this case, many interesting properties of multipartite entangled states are known. If we consider, for example, the quantum correlations of the Greenberger-Horne-Zeilinger (GHZ) state~\cite{GHZ} and its $n$-party generalization, we can find a nonlocal game that can be won with probability 1 in the quantum setting, while any classical local theory can win the game with probability at most $3/4$~\cite{mermin:prl90}.

Multipartite entangled states are a fundamental resource when quantum networks are considered. Indeed, they allow network agents to create strong correlations in order  to perform distributed tasks, to delegate computation to untrusted servers~\cite{broadbent:focs09}, or to compute, for example through the Measurement-Based Quantum Computation model~\cite{raussendorf:prl01}. A natural and fundamental question that arises then is whether the network agents should be required to trust the source that provides them with such multipartite entangled states or whether they are able to verify the entanglement.

In this work, we show that a quantum agent can verify efficiently with respect to the necessary resources, that an untrusted source creates entanglement, even in the presence of dishonest parties.

{\bf The model} -- We start our analysis by first describing in detail our model and its relation to previous work.
\paragraph{Source:} The source is untrusted. It is supposed to create the $n$-party GHZ state $\frac{1}{\sqrt{2^n}}\big(|0^n\rangle+|1^n\rangle\big)$
and distribute it to $n$ parties. By applying a Hadamard and a phase shift ($\sqrt{Z}$) gate to each qubit, the GHZ state can be expressed by the locally equivalent state
\begin{equation*}
\lvert\Phi_0^n\rangle=\frac{1}{\sqrt{2^{n-1}}}\Big[\sum_{\Delta(y)\equiv0\pmod{4}}\lvert y \rangle - \sum_{\Delta(y)\equiv2\pmod{4}}\lvert y \rangle\Big],
\end{equation*}
where $y$ is a classical $n$-bit string $y_1...y_n$ and $\Delta(y) = \sum_i y_i$ denotes the Hamming weight of $y$. We will use the latter state for our proofs. Such states have a wide range of applications, for example in nonlocal games, quantum game theory and quantum computation.

\paragraph{Parties:} A party can be honest or dishonest. An honest party does not know which parties are honest and which are dishonest. The dishonest parties can collaborate with each other and control the source. Their goal is to convince the honest parties that the source can create the $n$-party GHZ state, while in reality this may not be true. They are allowed to create a different state every time or entangle the state with any auxiliary space.
\paragraph{Local resources:} A party has a trusted single-qubit measurement device with two measurement settings and a trusted classical random source.
\paragraph{Network resources:} Every pair of parties shares a private classical channel, in other words, the communication between two honest parties remains secret. This is the standard setup for classical networks with dishonest parties, since in the absence of private channels we cannot guarantee security for more than a single honest party.

Note that we want to use the least possible resources for our protocol. Indeed, we only need each party to be able to produce randomness, to perform single-qubit measurements and to securely communicate classical information with the other parties. Since our goal is to construct protocols that can be widely available in the near future, it is imperative to minimize the resources available to the agents, especially the quantum resources that are considered more expensive than the classical ones, hence bringing such tasks closer to reality.

{\bf Related work} -- Most of the work on entanglement verification has considered the case where all parties are honest. For two parties, three models have been studied.

First, the standard model, where both parties trust their devices but they do not trust the source. This model corresponds to the setting of non-separability tests, where the two parties can perform together quantum tomography on the state distributed by the source and thus verify the existence of entanglement. In a cryptographic language, this corresponds to a setting where all parties are guaranteed to be honest. A related question concerning untrusted sources in quantum key distribution protocols is discussed in \cite{Lo:pra10}.

Second, the device independent model, where the parties trust neither their quantum devices nor the source. This model is related to the well known setting of the Bell nonlocality tests as well as to self-testing \cite{McKague:arxiv12}.

Third, the one-sided device independent model~\cite{branciard:arxiv11}, where the security is asymmetric: one party trusts his devices but the second party's devices and the source are untrusted. This model corresponds to the setting of generalized quantum steering~\cite{wiseman:prl07,jones:pra07}, where one party is also given control of the source and tries to convince the other party, who trusts his devices, that she can create entanglement. In a cryptographic language, an honest party tries to verify entanglement in the presence of a dishonest party who controls the source. Recently, there have been experimental demonstrations in this model~\cite{bennet:arxiv11, wittmann:arxiv11, smith:arxiv11}.

For the multipartite case, much less is known. In the standard model, pseudo-telepathy~\cite{brassard:iwads03} extends Mermin's game~\cite{mermin:prl90} to many parties; a maximally entangled state is used to play the game and wins with probability one, which is strictly better than in the classical case. In the device independent model, it was shown that honest parties who do not trust their devices can verify genuine multipartite entanglement by using appropriate entanglement witnesses~\cite{bancal:prl11}. Finally, in~\cite{cavalcanti:pra11} the authors present a unified framework for $n$-party entanglement verification and provide inequalities with different bounds for the different nonlocality classes that are considered.

{\bf Our Work} --
In our model, there are, in general, $k$ honest parties and $n-k$ dishonest parties who control the source, which is supposed to create an $n$-party GHZ state. Each honest party does not know which other parties are honest. Our goal is to provide an efficient test for an honest party, such that the test passes only if the state produced by the source creates entanglement between all $k$ honest parties and the $n-k$ dishonest parties.

First, if all players are honest, we prove that  any $n$-party state that is $\epsilon$-away from the $n$-party GHZ state passes the test with probability at most $1-\epsilon^2/2$. Second, in the presence of any number of dishonest parties, we prove a similar quantitative statement, this time for any $n$-party state that is $\epsilon$-away from the $n$-party GHZ state  up to a local unitary operation on the space of the dishonest parties.

For the special case of $n=2$, our model significantly extends the results in the generalized quantum steering setting by providing a tight analysis of the tradeoff between the distance of the shared state to the $n$-party GHZ and the probability of success of the test. For the case of $n=k$, \ie the standard model, our results again provide a tight analysis of the tradeoff between the distance and the probability the test passes. For general $n$ parties and $k$ honest ones, this is the first rigorous analysis of an entanglement verification test.

{\bf The Protocol $V$} -- Consider a source that is supposed to create and distribute the state $\ket{\Phi_0^n}$ to $n$ parties. We present a verification protocol $V$ that one party, called the Verifier, can run with the other $n-1$ parties, in order to verify that the state $\ket{\Psi}$ created by the source is in fact the correct one.
\begin{enumerate}
\item The Verifier selects for each $i\in [n]$ a random input $X_i\in\{0,1\}$, such that $\sum_{i=1}^n X_i \equiv 0 \pmod{2}$, and sends it to the corresponding party via a private classical channel.
\item  If  $X_i=0$, party $i$ performs a $Z$ operation.\\
         If  $X_i=1$, party $i$ performs a Hadamard operation.
\item Party $i$ measures in the $\{|0\rangle,|1\rangle\}$ basis and sends the corresponding outcome $Y_i\in\{0,1\}$ to the Verifier via the private channel.
\item The Verifier accepts the result if and only if:
\begin{equation*}
\sum_{i=1}^nY_i\equiv \frac{1}{2}\sum_{i=1}^nX_i \pmod{2}
\end{equation*}
\end{enumerate}

The above protocol assumes that a specific party plays the role of the Verifier. We will later address the question of how to pick such a Verifier among all honest parties.
Note that this test has been used before \cite{brassard:iwads03}, however our analysis is entirely different.
We denote by $T(\ket{\Psi})$ the event that the Verifier accepts the result of the Test, when the joint state is $\ket{\Psi}$.

{\bf Correctness of the protocol} --  We want to show that the state $\ket{\Phi_0^n}$ passes the Test with probability 1. We need to define the following state:
\begin{equation*}
|\Phi_1^n\rangle=\frac{1}{\sqrt{2^{n-1}}}\Big[\sum_{\Delta(y)\equiv1\pmod{4}}\lvert y \rangle - \sum_{\Delta(y)\equiv3\pmod{4}}\lvert y \rangle\Big]
\end{equation*}
It is easily verifiable (from the definition of the states) that for any $k$ and $n$:
\begin{equation}\label{eq:nparty}
\lvert\Phi_0^n\rangle=\frac{1}{\sqrt{2}}\Big[\lvert\Phi_0^k\rangle\lvert\Phi_0^{n-k}\rangle-\lvert\Phi_1^k\rangle\lvert\Phi_1^{n-k}\rangle\Big]
\end{equation}
From condition $\sum_{i=1}^n X_i\equiv0\pmod{2}$, we have two cases:
\begin{itemize}
\item $(\frac{1}{2}\sum^n_{i=1}X_i)\equiv 0\pmod 2$: This means that the sum of the inputs is a multiple of 4. Using Eq. (\ref{eq:nparty}), it can be proven that the state $|\Phi_0^n\rangle$ goes to $\pm|\Phi_0^n\rangle$ when we apply to it an operator consisting of a $0\pmod4$ number of single-qubit Hadamards, and $Z$ gates on the remaining qubits. Hence we always have $\sum_{i=1}^nY_i\equiv0\pmod{2}$.
\item $(\frac{1}{2}\sum^n_{i=1}X_i)\equiv 1$(mod 2): This means that the sum of the inputs is even but not a multiple of 4. Again, it can be proven that  the state $|\Phi_0^n\rangle$ goes to  $\pm|\Phi_1^n\rangle$ when we apply to it an operator consisting of a $2\pmod4$ number of single-qubit Hadamards, and $Z$ gates on the remaining qubits. Hence we always have $\sum_{i=1}^nY_i\equiv1\pmod{2}$.
\end{itemize}

{\bf Security in the Honest Model} -- We now look at the model where all $n$ parties are honest and analyze the probability that our Test accepts a state as a function of the distance of this state to $\ket{\Phi_0^n}$. We first analyze the case of a pure state. Denoting by $D(|\psi\rangle,|\phi\rangle)$ the trace distance between two states $|\psi\rangle$ and $|\phi\rangle$, we have
\begin{theorem}
If $D(|\Psi\rangle,|\Phi_0^n\rangle)=\epsilon$,  $\Pr [ T(\ket{\Psi}) ] \leq 1-\frac{\epsilon^2}{2}$.
\end{theorem}
The main idea of the proof is to show that our Test is equivalent to performing a POVM $\{P_n,I-P_n\}$ (where the first outcome corresponds to acceptance) with
\begin{equation*}
 P_n = \ketbra{\Phi^n_0}{\Phi^n_0} + \frac{1}{2}I_{S_n},
\end{equation*}
where $S_n$ denotes the subspace of  $n$-qubit states that are orthogonal to both $|\Phi_0^n\rangle$ and $|\Phi_1^n\rangle$, and $I_{S_n}$ denotes the projection on this subspace. In other words, we show that the state $|\Phi_0^n\rangle$ passes the Test with probability 1, the state $|\Phi_1^n\rangle$ passes the Test with probability 0 and all other states in the orthogonal subspace pass the Test with probability exactly $1/2$. The proof of this statement is in fact quite involved and is done by induction on the dimension of the state (see Appendix for details). With this characterization for our Test, we express any state $\ket{\Psi}$ such that $D(\ket{\Psi},\ket{\Phi_0^n})=\epsilon$ as
$\lvert\Psi\rangle=\sqrt{1-\epsilon^2}|\Phi_0^n\rangle+\sum_{i=1}^{2^n-1}\epsilon_i|\Phi_i^n\rangle,
$
where for $i\geq2,~|\Phi_i^n\rangle\in S_n$ and $\sum_{i=1}^{2^n-1}\epsilon_i^2=\epsilon^2$, and hence we have
$\Pr[ T(\ket{\Psi})] =   \tr(P_n |\Psi\rangle) \leq 1-\frac{\epsilon^2}{2}.
$

Note that for a mixed state $\rho=\{p_i,\ket{\Psi_i}\}$, $\Pr[T(\rho)]= \sum_i p_i \Pr[T(\ket{\Psi_i})]$. Then, by convexity we have
\begin{corollary}
If $D(\rho,|\Phi_0^n\rangle)=\epsilon$,  $\Pr [ T(\rho) ] \leq 1-\frac{\epsilon^2}{2}$.
\end{corollary}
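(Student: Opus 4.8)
\emph{Proof plan.} The plan is to reduce the mixed-state statement to the pure-state Theorem~1 by exploiting that the acceptance probability is linear in the state: since the Test is the POVM $\{P_n,I-P_n\}$, we have $\Pr[T(\rho)]=\tr(P_n\rho)$. Writing $\rho=\sum_i p_i\ketbra{\Psi_i}{\Psi_i}$ for an arbitrary decomposition into pure states, this gives $\Pr[T(\rho)]=\sum_i p_i\Pr[T(\ket{\Psi_i})]$, and applying Theorem~1 term by term yields $\Pr[T(\rho)]\le\sum_i p_i\big(1-\tfrac12 D_i^2\big)$, where $D_i:=D(\ket{\Psi_i},\ket{\Phi_0^n})$. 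It then remains to show $\sum_i p_i D_i^2\ge\epsilon^2$ with $\epsilon=D(\rho,\ketbra{\Phi_0^n}{\Phi_0^n})$.

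To close this I would proceed by convexity in two steps. First, $x\mapsto 1-x^2/2$ is concave, so Jensen's inequality gives $\sum_i p_i\big(1-\tfrac12 D_i^2\big)\le 1-\tfrac12\big(\sum_i p_i D_i\big)^2$. Second, the trace distance is jointly convex, hence convex in its first argument, so $\sum_i p_i D_i\ge D(\rho,\ketbra{\Phi_0^n}{\Phi_0^n})=\epsilon$; since $x\mapsto 1-x^2/2$ is decreasing on $[0,1]$, chaining these two facts gives $\Pr[T(\rho)]\le 1-\epsilon^2/2$. Alternatively, and perhaps more transparently, one notes that $\sum_i p_i D_i^2=1-\bra{\Phi_0^n}\rho\ket{\Phi_0^n}$ is itself linear in $\rho$, so the whole argument collapses to the single inequality $\bra{\Phi_0^n}\rho\ket{\Phi_0^n}\le 1-\epsilon^2$, which is the Fuchs--van de Graaf bound $D(\rho,\ketbra{\phi}{\phi})\le\sqrt{1-\bra{\phi}\rho\ket{\phi}}$ specialised to a pure target; substituting into $\tr(P_n\rho)\le \tfrac12+\tfrac12\bra{\Phi_0^n}\rho\ket{\Phi_0^n}$ (valid since $P_n\preceq\tfrac12 I+\tfrac12\ketbra{\Phi_0^n}{\Phi_0^n}$, the two sides agreeing on $S_n$ and on $\ket{\Phi_0^n}$ and the left side being $0$ on $\ket{\Phi_1^n}$) then finishes the proof.

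The only delicate point — and the place a careless argument goes wrong — is that one must use the \emph{quadratic} relation between trace distance to a pure state and the overlap; the crude estimate $1-\bra{\Phi_0^n}\rho\ket{\Phi_0^n}\le D(\rho,\ketbra{\Phi_0^n}{\Phi_0^n})$ only yields the weaker $1-\epsilon/2$. One also has to keep the monotonicity direction straight, so that a \emph{lower} bound on the averaged distance (supplied by convexity of the trace distance) translates into the desired \emph{upper} bound on $\Pr[T(\rho)]$. Beyond this bookkeeping there is no real obstacle, since the substantive content — the POVM characterisation of the Test — is already established for pure states in Theorem~1.
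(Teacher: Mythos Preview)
Your proposal is correct and follows the same line as the paper's argument: decompose $\rho$ into pure states, use linearity of $\Pr[T(\cdot)]$, apply Theorem~1 termwise, and close with convexity of the trace distance. The paper compresses all of this into the single phrase ``by convexity''; you have unpacked it carefully and correctly, including the monotonicity check that a lower bound on $\sum_i p_i D_i$ really does push the acceptance probability down. Your alternative route via $P_n\preceq\tfrac12 I+\tfrac12\ketbra{\Phi_0^n}{\Phi_0^n}$ and the Fuchs--van~de~Graaf bound is a nice, slightly more direct variant that bypasses the pure-state decomposition entirely, but it is not substantially different in spirit.
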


{\bf Security in the Dishonest Model} -- We look now at the model where the honest Verifier runs the test in the presence of dishonest parties. The Verifier is considered to be known to all parties. We prove in this case a theorem similar to the case of all honest parties. It should be clear here that there is no way for the honest parties to determine whether the dishonest parties act as $n-k$ independent parties each holding one qubit or whether they have colluded to one party. For example, the state $\lvert\Phi_0^{k+1}\rangle=\frac{1}{\sqrt{2}}\Big[\lvert\Phi_0^k\rangle\lvert 0\rangle-\lvert\Phi_1^k\rangle\lvert 1\rangle\Big]$, where the $n-k$ dishonest parties hold a single qubit, passes the Test with probability 1, since the dishonest parties can locally map this state to $\ket{\Phi_0^n}$. Hence, the correct security statement must take into account the fact that the dishonest parties may apply some operator on their space.
\begin{theorem}
Let $|\Psi\rangle$ be the state of all $n$ parties. If $\epsilon=\min_U D(U|\Psi\rangle,|\Phi_0^n\rangle)$, where $U$ is an operator on the space of the dishonest parties, then $\Pr[T(\ket{\Psi}] \leq1-\frac{\epsilon^2}{4}$.
\end{theorem}

Let us assume, without loss of generality, that the $n$ parties share a state of the form
$\lvert\Psi\rangle=|\Phi_0^k\rangle|\Psi_0\rangle+|\Phi_1^k\rangle|\Psi_1\rangle+\ket{\mathcal{X}}$,
where in $\ket{\mathcal{X}}$ the component of the honest parties is orthogonal to $\ket{\Phi_0^k}$ and $\ket{\Phi_1^k}$, and the dishonest state is not normalized. For the dishonest parties, making the Verifier accept the Test is equivalent to guessing the honest output $Y_H:=\sum_H Y_i\pmod 2$, where $H$ is the set of the honest parties, before announcing their measurement outcomes. The optimal probability of guessing $Y_H$ given $X_H:=\sum_H X_i\pmod 2$ is given by the Helstrom measurement. Let $p = \|\ket{\Psi_0}\|^2$, $q = \|\ket{\Psi_1}\|^2$ and $\braket{\Psi_0}{\Psi_1}^2= pq\cos^2\theta$, where $\theta$ is the angle between $\ket{\Psi_0}$ and $\ket{\Psi_1}$. Then, we calculate that:
\begin{eqnarray*}\label{eq3main}
Pr[\text{guess } Y_H]
  &\leq & 1 - \frac{1}{4}\Big(1-\frac{(p+q)^2+4pq\sin^2\theta}{2}\Big).
\end{eqnarray*}

We consider now that the dishonest parties can perform a local operation on their state, in order to maximize their cheating probability. Thus, the distance of the dishonest state from the correct one is : $\epsilon = min_UD((I\otimes U)\ket{\Psi},\ket{\Phi_0^n})=min_U\sqrt{1-F^2((I\otimes U)\ket{\Psi},\ket{\Phi_0^n})}$, where by $F(|\psi\rangle,|\phi\rangle)$ we denote the fidelity between two states $|\psi\rangle$ and $|\phi\rangle$. If the reduced density matrices of the honest players of the perfect and the real state are $\sigma_H$ and $\rho_H$ respectively, it holds that there exists a local operation $R$ on the dishonest state such that:
\[
F((I\otimes R)\ket{\Psi},\ket{\Phi_0^n}) = F(\sigma_H,\rho_H)=\frac{p+q+2\sqrt{pq}sin\theta}{2}
\]
For this operation $R$, the distance becomes:
\[
\epsilon^2 \le 1-F^2((I\otimes R)\ket{\Psi},\ket{\Phi_0^n}) = 1-F^2(\sigma_H,\rho_H),
\]
which concludes the proof (see Appendix for more details).

{\bf Security for all honest parties} --
We have presented a protocol that a Verifier can use to verify the state of an untrusted source in the presence of dishonest parties with minimal resources.
Our protocol can be useful in the scenario where some party wants to perform a complex quantum computation and needs to delegate parts of the computation to other parties, who, of course, would need some source of multipartite entanglement in order to perform the joint computation.  Note that the Verifier can repeat the protocol sequentially in order to increase the probability of detecting an erroneous state.

In a more general scenario, however, where parties need to perform securely some distributed multipartite computation using the multipartite entangled state as an initial shared resource, we need to guarantee security for all honest parties at the same time. In other words, we would like a protocol that guarantees to all honest parties that they will only accept to use a state for the computation that comes from a source that produces states that are very close to an $n$-party GHZ state.
A priori, such a task is impossible, since any such protocol could be used to produce unbiased strong coins \cite{lochau98} (the parties could just measure the entangled state to produce coins). Hence, we need to provide some additional resource.

\paragraph{Trusted Common Random Source (CRS):} We assume that all parties have access to a trusted classical random source that provides them with the same randomness.

This is, of course, a powerful, but necessary, resource. One way to achieve it would be to assume that at least a third of all parties are honest, since this implies the ability to securely produce random bits only with authenticated classical communication~\cite{chaum:stoc88}. Note that in order to achieve quantum secure multiparty computation, at least  a majority of honest parties is required \cite{ben-or2006}, in which case it is possible to construct a CRS.

We describe how to repeat our verification test in order to guarantee the following: when the parties decide to use the state for computation, then the probability that the state produced by the source is $\epsilon$-away from the $n$-party GHZ state goes to zero exponentially fast with the number of repetitions. Note that our guarantee is on the state produced by the source. Of course, we cannot prevent the dishonest parties from destroying the entanglement with the honest parties just before using this state for further computation. However, we argue that our test is still useful for secure multiparty computation. First, as we noted before, if the goal of the dishonest parties is to convince the honest parties of the source's ability to create entanglement, destroying the entanglement after the source has produced it does not help them. Second, in general, in secure multiparty computation, one of the main goals is to guarantee that the inputs of the honest parties remain secret for the dishonest parties.
Since in our model the parties will only perform local quantum operations (if the parties could send their qubits to other parties then checking the source will be much easier by having all qubits sent to the verifier), by destroying the entanglement, the dishonest parties cannot increase their information about the honest parties' inputs.  Third, we still have a strong guarantee on the honest players' states from which they can, for example, extract correlated secret bits.

Let $S$ be a security parameter.\\
\noindent
{\bf The Symmetric Protocol}
\begin{enumerate}
\item The source distributes a state $\ket{\Psi}$ to the $n$ parties (the honest source distributes the state $\lvert\Phi_0^n\rangle$).
\item Parties receive $r\in\{0,1\}^S$ and $i \in [n]$ from CRS.
\begin{enumerate}
\item If $r=\bf{0}$, the state $\ket{\Psi}$ is used for computation.
\item If $r \neq \bf{0},$ Party $i$ runs Protocol $V$ with $\ket{\Psi}$. If he rejects, then abort, otherwise go to Step 1.
\end{enumerate}
\end{enumerate}
Note that the source may create a different state at every repetition of the protocol. It is also important that the state is distributed before the parties receive the randomness from the CRS. Let $C_\epsilon$ be the event that the symmetric protocol has not aborted and that the state used for the computation, which we denote by $\ket{\Psi}$, is such that $\min_U D(U\ket{\Psi},\ket{\Phi_0^n}) \geq \epsilon$, where $U$ is an operator on the space of the dishonest parties.
We will prove the following:
\begin{theorem}
For all $\epsilon >0$,
$\Pr[C_\epsilon] \leq  2^{-S}\frac{4n}{k\epsilon^2}.$
\end{theorem}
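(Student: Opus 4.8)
The plan is to unfold the Symmetric Protocol round by round and reduce the statement to an elementary summation controlled by the dishonest-model bound (Theorem~2). Call a state $\epsilon$-far if $\min_U D(U\ket{\Psi},\ket{\Phi_0^n})\ge\epsilon$, the minimum over local operators $U$ on the dishonest register. For $t\ge 1$, let $A_t$ be the event that the protocol reaches round $t$ without having aborted, i.e.\ that in rounds $1,\dots,t-1$ the CRS string was nonzero and the corresponding run of $V$ accepted, and let $\ket{\Psi_t}$ be the (history-dependent) state the source distributes at round $t$. The structural fact to exploit --- emphasized in the protocol description --- is that $\ket{\Psi_t}$ is committed \emph{before} the round-$t$ CRS output $(r,i)$ is sampled, so the event $\{r=\mathbf 0\}$, which has probability $2^{-S}$, is independent of $A_t$ and of whether $\ket{\Psi_t}$ is $\epsilon$-far. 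Since $C_\epsilon$ is exactly the disjoint union over $t$ of $A_t\cap\{r=\mathbf 0\}\cap\{\ket{\Psi_t}\text{ is }\epsilon\text{-far}\}$, this independence gives
\begin{equation*}
\Pr[C_\epsilon]=2^{-S}\sum_{t\ge1}p_t,\qquad p_t:=\Pr[A_t\cap\{\ket{\Psi_t}\text{ is }\epsilon\text{-far}\}],
\end{equation*}
and it remains to show $\sum_t p_t\le 4n/(k\epsilon^2)$.

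Next I would prove a one-round decay estimate. Condition on $A_t$ and on $\ket{\Psi_t}$ being $\epsilon$-far. Conditionally, with probability $k/n$ the CRS names an honest party as Verifier, and then --- the other $k-1$ honest parties also running $V$ faithfully while the $n-k$ dishonest ones are free to deviate --- that run of $V$ is precisely the scenario of Theorem~2, so (using that the test is symmetric among the honest parties, and absorbing any mixing of the distributed state into the dishonest register) it accepts with probability at most $1-\epsilon^2/4$. On the complementary event (a dishonest Verifier, or $r=\mathbf 0$ and hence no test) the protocol survives round $t$ with probability at most $1$. Hence, whenever $\ket{\Psi_t}$ is $\epsilon$-far, $\Pr[A_{t+1}\mid A_t,\ket{\Psi_t}]\le \tfrac kn\bigl(1-\tfrac{\epsilon^2}{4}\bigr)+\tfrac{n-k}{n}=1-\tfrac{k\epsilon^2}{4n}$, while $\Pr[A_{t+1}\mid A_t,\ket{\Psi_t}]\le1$ always.

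Then I would telescope. Splitting $\Pr[A_{t+1}]$ according to whether $\ket{\Psi_t}$ is $\epsilon$-far and inserting the two conditional bounds yields $\Pr[A_{t+1}]\le\bigl(1-\tfrac{k\epsilon^2}{4n}\bigr)p_t+(\Pr[A_t]-p_t)=\Pr[A_t]-\tfrac{k\epsilon^2}{4n}\,p_t$. Summing over $t\ge1$ and using $\Pr[A_1]=1$ and $\Pr[A_t]\ge0$ gives $\tfrac{k\epsilon^2}{4n}\sum_{t\ge1}p_t\le\Pr[A_1]=1$, i.e.\ $\sum_t p_t\le 4n/(k\epsilon^2)$; combined with $\Pr[C_\epsilon]=2^{-S}\sum_t p_t$ this proves the theorem.

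The part that I expect to need the most care is the probabilistic bookkeeping for an \emph{adaptive} dishonest source: one has to fix the right filtration so that $\ket{\Psi_t}$ is measurable with respect to the past --- which on $A_t$ already includes all earlier CRS outputs, forced to be nonzero --- while the current bit $r$ stays independent of it, and one has to check that Theorem~2 really applies verbatim to each honest choice of Verifier arising in the middle of this adaptive process, including the passage from a possibly mixed distributed state to the pure-state setting by attaching ancillas to the dishonest parties. Everything else is the elementary estimate above.
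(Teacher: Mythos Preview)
Your proof is correct and, in several respects, tidier than the paper's. The paper first argues that the adversary's optimal strategy is the \emph{non-adaptive} one of distributing an exactly $\epsilon$-far state in every round; under that fixed strategy it computes $\Pr[C_\epsilon^l]\le 2^{-S}(1-2^{-S})^{l-1}\bigl(1-\tfrac{k\epsilon^2}{4n}\bigr)^{l-1}$ and then bounds $\sum_l\Pr[C_\epsilon^l]$ by an integral. You never reduce to a worst-case strategy: you keep the adversary fully adaptive, peel off the factor $2^{-S}$ from independence of the fresh CRS bit, and control $\sum_t p_t$ via the telescoping inequality $\Pr[A_{t+1}]\le\Pr[A_t]-\tfrac{k\epsilon^2}{4n}\,p_t$. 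Your route sidesteps both the informal ``it is not hard to see that the optimal dishonest strategy is \ldots'' step and the integral approximation, at the price of the filtration bookkeeping you already flagged; the paper's route yields the slightly sharper per-round expression (it keeps the $(1-2^{-S})^{l-1}$ factor), but that extra precision is thrown away in the final bound anyway.
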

The proof is given in the Appendix. When the Verifier is dishonest we suppose that the state always passes the test. By choosing $S= \log \frac{4n\delta}{\epsilon^2}$ for some constant $\delta>0$, all honest parties have the guarantee that the probability the state used has distance at least $\epsilon$ from the correct one, is at most $1/\delta$.
Note that the expected number of repetitions of the protocol is $2^S$, which, with our choice of $S$, is polynomial in $n$ and $1/\epsilon$ (and with probability exponentially close to 1 the number of repetitions is at most $O(2^S)$).
Moreover, this protocol provides guarantees to all honest parties, unlike the case of quantum steering and our multipartite generalization. To this end, it was necessary to make the assumption of a trusted classical random source.

{\bf Discussion} -- It is important to note that our analysis does not take into account losses and noise that appear in a realistic setting. It will be interesting to study such conditions, as was recently done for bipartite quantum steering~\cite{bennet:arxiv11}.
We also note that although our results provide a verification test for the GHZ state, the analysis should in principle be applicable to all states for which a Bell-type test is available, such as stabilizer states.

We acknowledge discussions with A. Marin, D. Markham, T. Lawson, and A. Leverrier, and financial support from the ANR, Digiteo, the EU and the Ministry of Education, Singapore.

\begin{appendix}

\begin{center}{\bf APPENDIX}\end{center}
{\bf Security in the Honest Model} \\

We look first at the model where all $n$ parties are honest and analyze the probability that our Test accepts a state as a function of the distance of this state to $\ket{\Phi_0^n}$. We first analyze the case of a pure state. Denoting by $D(|\psi\rangle,|\phi\rangle)$ the trace distance between two states $|\psi\rangle$ and $|\phi\rangle$, we have
\setcounter{theorem}{0}
\begin{theorem}
If $D(|\Psi\rangle,|\Phi_0^n\rangle)=\epsilon$,  $\Pr [ T(\ket{\Psi}) ] \leq 1-\frac{\epsilon^2}{2}$.
\end{theorem}
\begin{proof}

Let $X_1$ and $Y_1$ be the input and output of party 1, respectively. For $\sum_{i = 1}^n X_i \equiv 0\pmod 2$ and randomly chosen inputs, we have the following cases:

\begin{enumerate}
\item $X_1 = 0$ and $Y_1 = 0$, the test passes if\\
$\sum_{i=2}^{n} Y_i \equiv \left\{
\begin{array}{l}
0 \pmod 2 \text{ when } \sum_{i=2}^n X_i \equiv 0 \pmod 4 \\
1 \pmod 2 \text{ when }\sum_{i=2}^n X_i \equiv 2 \pmod 4 \\
\end{array}
\right.
$

\item $X_1 = 0$ and $Y_1 = 1$, the test passes if\\
$\sum_{i=2}^{n} Y_i \equiv \left\{
\begin{array}{l}
1 \pmod 2\text{ when } \sum_{i=2}^n X_i \equiv 0 \pmod 4 \\
0 \pmod 2\text{ when } \sum_{i=2}^n X_i \equiv 2 \pmod 4 \\
\end{array}
\right.$

\item $X_1 = 1$ and $Y_1 = 1$, the test passes if\\
$\sum_{i=2}^{n} Y_i \equiv \left\{
\begin{array}{l}
0 \pmod 2\text{ when } \sum_{i=2}^n X_i \equiv 1 \pmod 4 \\
1 \pmod 2\text{ when } \sum_{i=2}^n X_i \equiv 3 \pmod 4 \\
\end{array}
\right.$

\item $X_1 =1$ and $Y_1 = 0$, the test passes if \\
$\sum_{i=2}^{n} Y_i \equiv \left\{
\begin{array}{l}
1 \pmod 2\text{ when } \sum_{i=2}^n X_i \equiv 1 \pmod 4 \\
0 \pmod 2\text{ when } \sum_{i=2}^n X_i \equiv 3 \pmod 4 \\
\end{array}
\right.$

\end{enumerate}
We can consider the above tests on the $n-1$ parties as projective measurements for different inputs. When $\sum_{i = 2}^n X_i \equiv 0\pmod 2$, the $n-1$ parties perform a POVM $\{P_{n-1}, I - P_{n-1}\}$, where the first outcome corresponds to case 1 and the second to case 2. When $\sum_{i = 2}^n X_i \equiv 1\pmod 2$, the test is equivalent to performing a POVM $\{Q_{n-1}, I - Q_{n-1}\}$, where the first outcome corresponds to case 3 and the second to case 4.

Let $S_n$ the subspace of quantum pure states of $n$ qubits that are orthogonal to both $|\Phi_0^n\rangle$ and $|\Phi_1^n\rangle$, and let $I_{S_n}$ the projection on this subspace. We will prove by induction that:
\[
P_n = \ketbra{\Phi^n_0}{\Phi^n_0} + \frac{1}{2}I_{S_n}~,~Q_n = \ketbra{\Phi^n_+}{\Phi^n_+} + \frac{1}{2}I_{S_n}
\]
For $n = 1$, $S_n$ is the empty subspace, $P_1 = \ketbra{0}{0}$ and $Q_1 = \ketbra{+}{+}$, so the statements hold. We now suppose the statements are true for $n-1$ parties with $n \ge 2$ and we prove it for $n$ parties. Since the choice of $X_1$ is uniformly random, we have
\begin{eqnarray*}\label{P_n}
P_n & = &\frac{1}{2} \left( \ketbra{0}{0} \otimes P_{n-1}+ \ketbra{1}{1} \otimes (I - P_{n-1}) \right)\nonumber\\
       & + &\frac{1}{2} \left( \ketbra{+}{+} \otimes (I - Q_{n-1}) + \ketbra{-}{-} \otimes Q_{n-1} \right)
\end{eqnarray*}

We now use the induction property. Notice that $(I - P_{n-1}) = \ketbra{\Phi^{n-1}_1}{\Phi^{n-1}_1} + \frac{1}{2}I_{S_{n-1}}$ and $(I - Q_{n-1}) = \ketbra{\Phi^{n-1}_-}{\Phi^{n-1}_-} + \frac{1}{2}I_{S_{n-1}}$. Then,
\begin{eqnarray*}
P_n & = &\frac{1}{2}\left( \ketbra{0}{0} \otimes \fzz  + \ketbra{1}{1} \otimes \foo  \right. \\
       & + & \ketbra{+}{+} \otimes \fmm  + \ketbra{-}{-} \otimes \fpp \\
       & + & \left. I_1 \otimes I_{S_{n-1}} \right),
\end{eqnarray*}
where $I_1$ is the identity operator on one qubit.

Using $\fmm + \fpp = \fzz + \foo$ and $\fmm - \fpp = - \fzo - \foz$, we have:
\begin{eqnarray*}
P_n & = &\ketbra{0}{0}\Big(\frac{3}{4} \fzz + \frac{1}{4} \foo\Big) \\
      & - &\ketbra{0}{1}\Big(\frac{1}{4}\fzo + \frac{1}{4}\foz\Big) \\
      & - & \ketbra{1}{0}\Big(\frac{1}{4}\fzo + \frac{1}{4}\foz\Big) \\
      & + &\ketbra{1}{1}\Big(\frac{1}{4} \fzz + \frac{3}{4} \foo \Big) \\
      & + &\frac{1}{2} I_1 \otimes I_{S_{n-1}}
\end{eqnarray*}
Note that
\begin{align*}
\ket{\Phi_0^n} & = \frac{1}{\sqrt{2}} \ket{0}\ket{\Phi_0^{n-1}} - \frac{1}{\sqrt{2}}\ket{1}\ket{\Phi_1^{n-1}} \\
\ket{\Phi_1^n} & = \frac{1}{\sqrt{2}} \ket{0}\ket{\Phi_1^{n-1}} + \frac{1}{\sqrt{2}}\ket{1}\ket{\Phi_0^{n-1}}
\end{align*}
and define
\begin{align*}
\ket{\Psi_0^n} & = \frac{1}{\sqrt{2}} \ket{0}\ket{\Phi_0^{n-1}} + \frac{1}{\sqrt{2}}\ket{1}\ket{\Phi_1^{n-1}} \\
\ket{\Psi_1^n} & = \frac{1}{\sqrt{2}} \ket{0}\ket{\Phi_1^{n-1}} - \frac{1}{\sqrt{2}}\ket{1}\ket{\Phi_0^{n-1}}
\end{align*}
After some simple calculations, we have
\begin{align*}
P_n = \ketbra{\Phi^n_0}{\Phi^n_0} + \frac{1}{2} \ketbra{\Psi^n_0}{\Psi^n_0} + \frac{1}{2} \ketbra{\Psi^n_1}{\Psi^n_1} +  \frac{1}{2} I_1 \otimes I_{S_{n-1}}
\end{align*}
Note also that
\begin{align*}\label{ISn}
I_{S_n} = I_1 \otimes I_{S_{n-1}} + \ketbra{\Psi_0^n}{\Psi_0^n} +
\ketbra{\Psi_1^n}{\Psi_1^n},
\end{align*}
which concludes the proof for $P_n$. To complete the induction, we can show in a similar way that $ Q_n = \ketbra{\Phi^n_+}{\Phi^n_+} + \frac{1}{2}I_{S_n} $.

Assume now that the $n$ honest parties share a state $\ket{\Psi}$ with $D(\ket{\Psi},\ket{\Phi_0^n})=\epsilon$. We can express $\lvert\Psi\rangle$ as follows:
\begin{equation*}
\lvert\Psi\rangle=\sqrt{1-\epsilon^2}|\Phi_0^n\rangle+\sum_{i=1}^{2^n-1}\epsilon_i|\Phi_i^n\rangle,
\end{equation*}
where for $i\geq2,~|\Phi_i^n\rangle\in S_n$ and $\sum_{i=1}^{2^n-1}\epsilon_i^2=\epsilon$. The state $|\Psi\rangle$ verifies the Test in the protocol with probability $\tr(P_n |\Psi\rangle)$, so we have:
\begin{eqnarray*}
\Pr[ T(\ket{\Psi})]&=&1-\epsilon^2+\frac{\sum_{i=2}^{2^n-1}\epsilon^2}{2}\leq1-\epsilon^2 +\frac{\epsilon^2}{2}\\
    &=&1-\frac{\epsilon^2}{2}
\end{eqnarray*}
\end{proof}

\noindent
{\bf Security in the Dishonest Model} \\

We look now at the model where the honest Verifier runs the test in the presence of dishonest parties. The Verifier is considered to be known to all parties. Let us assume that there is a set $H$ of $k$ honest parties (the number and identity of all but himself are unknown to the Verifier) and a set $D$ of $n-k$ dishonest parties that collaborate with the quantum source. The goal of the dishonest parties is to convince the Verifier that the source can create the $n$-party GHZ state, while this may not be true.

Let $X_D :=  \sum_D X_i\pmod 2, Y_D := \sum_D Y_i\pmod 2$ denote the parity of the inputs and outputs of the dishonest parties, respectively (similarly for $X_H, Y_H$ for the honest parties). Since the Verifier picks the inputs for all parties at random, we can describe the two equally probable Tests that the $k$ honest parties perform with the POVMs $\{P_k, I-P_k\}$ and $\{Q_k,I-Q_k\}$, as they were defined before.

More precisely, if $X_D=0$ then the Test corresponds to the POVM $\{P_k, I-P_k\}$, where the accepting outcome is $P_k$ when  $Y_D\equiv\frac{1}{2}\sum_D X_i\pmod 2$ and $I-P_k$ when $ Y_D\not\equiv\frac{1}{2}\sum_DX_i\pmod 2$.
If $X_D=1$ then the Test corresponds to the POVM $\{Q_k,I-Q_k\}$, where the accepting outcome is $Q_k$ when  $Y_D\equiv\frac{1}{2}(-1+\sum_DX_i)\pmod 2$ and $I-Q_k$ when $Y_D\not\equiv\frac{1}{2}(-1+\sum_DX_i) \pmod 2$.

We prove a theorem similar to the case of all honest parties. It should be clear that there is no way for the honest parties to determine whether the dishonest parties act as $n-k$ independent parties each holding one qubit or whether they have colluded to one party. For example, the state $\lvert\Phi_0^{k+1}\rangle=\frac{1}{\sqrt{2}}\Big[\lvert\Phi_0^k\rangle\lvert 0\rangle-\lvert\Phi_1^k\rangle\lvert 1\rangle\Big]$, where the $n-k$ dishonest parties hold a single qubit, passes the Test with probability 1, since the dishonest parties can locally map this state to $\ket{\Phi_0^n}$. Hence, the correct security statement must take into account the fact that the dishonest parties may apply some operator on their space.

\begin{theorem}
Let $|\Psi\rangle$ be the state of all $n$ parties. If $\epsilon=\min_U D(U|\Psi\rangle,|\Phi_0^n\rangle)$, where $U$ is an operator on the space of the dishonest parties, then $\Pr[T(\ket{\Psi}] \leq1-\frac{\epsilon^2}{4}$.
\end{theorem}

\begin{proof}
Let us assume, without loss of generality, that the $n$ parties share a state of the form
\begin{equation}\label{eq:psigeneral}
\lvert\Psi\rangle=|\Phi_0^k\rangle|\Psi_0\rangle+|\Phi_1^k\rangle|\Psi_1\rangle+\ket{\cal{X}},
\end{equation}
where the honest part of state $\ket{\cal{X}}$ is orthogonal to both $\ket{\Phi_0^k}$ and $\ket{\Phi_1^k}$ and the dishonest state is not supposed to be normalized. For the dishonest parties, making the Verifier accept the Test is equivalent to guessing the honest output $Y_H$ before announcing their measurement outcomes $Y_D$. The Helstrom measurement can provide the best guess for $Y_H$ by distinguishing the dishonest states for the honest output $Y_H=0$ and $Y_H=1$.  Let $\rho_{n-k} = tr_k \altketbra{\mathcal{X}}$ be the (unnormalized) reduced density operator when the honest parties are traced out of the state $\ket{\mathcal{X}}$. Recall that $P_k = \altketbra{\Phi_0^k} + \frac{1}{2} I_{S_k}$. Hence, for $X_H=0$ we have :
\begin{eqnarray*}
&&\Pr[\text{guess } Y_H | X_H=0 ] \\
 &&=\frac{1}{2}+\frac{1}{2}\Big\|tr_k\big[P_k\otimes I_{n-k}\ket{\Psi}-\big(I-P_k)\otimes I_{n-k}\ket{\Psi}\big]\Big\| \\
                     &&=\frac{1}{2}+\frac{1}{2}\Big\|\lvert\Psi_0 \rangle\langle\Psi_0| + \frac{1}{2}\rho_{n-k}- \left( \lvert\Psi_1\rangle\langle\Psi_1| + \frac{1}{2}\rho_{n-k}\right) \Big\|\\
& &=\frac{1}{2}+\frac{1}{2}\Big\|\lvert\Psi_0 \rangle\langle\Psi_0|- \lvert\Psi_1\rangle\langle\Psi_1| \Big\|\\ \\
& &= \frac{1}{2}+\frac{1}{2}\sqrt{\big(\big\|\ket{\Psi_0}\big\|^2+\big\|\ket{\Psi_1}\big\|^2\big)^2-4\langle\Psi_0|\Psi_1\rangle^2}
\end{eqnarray*}
where we calculated the trace norm as the sum of the absolute values of the eigenvalues of the matrix. Similarly, we can calculate that:
\begin{eqnarray*}
&&\Pr[\text{guess } Y_H | X_H=1 ]  \\
&&= \frac{1}{2}+\frac{1}{2}\sqrt{\big(\big\|\ket{\Psi_+}\big\|^2+\big\|\ket{\Psi_-}\big\|^2\big)^2-4\langle\Psi_+|\Psi_-\rangle^2}
\end{eqnarray*}
We define $p = \|\ket{\Psi_0}\|^2$ and $q = \|\ket{\Psi_1}\|^2$. Let also $\theta$ be the angle between $\ket{\Psi_0}$ and $\ket{\Psi_1}$ such that $\braket{\Psi_0}{\Psi_1}^2= p q\cos^2{\theta}$. We have
$\langle\Psi_+|\Psi_-\rangle^2=1/4\big( \big\||\Psi_0\rangle\big\|^2-\big\||\Psi_1\rangle\big\|^2  \big)^2 = (p-q)^2/4$.We also know that $\|\ket{\Psi_+}\|^2 + \|\ket{\Psi_-}\|^2 = \|\ket{\Psi_0}\|^2 + \|\ket{\Psi_1}\|^2 =p+q\le 1$. Since the value of $X_H$ is random, we have:
\begin{align}\label{probability}
&\Pr[T(\ket{\Psi})]=\nonumber\\
&=\frac{1}{2}\Big(\Pr[\text{guess } Y_H | X_H=0 ]+\Pr[\text{guess } Y_H | X_H=1 ]    \Big)\nonumber\\
                           &= \frac{1}{2}+\frac{1}{4}\Big(\sqrt{(p+q)^2-4pqcos^2\theta}  +\sqrt{(p+q)^2-(p-q)^2}  \Big)\nonumber\\
                          &=\frac{1}{2}+\frac{1}{4}\Big(\sqrt{(p+q)^2-4pqcos^2\theta}  +2\sqrt{pq}  \Big)\nonumber\\
                          &\leq \frac{1}{2}+\frac{1}{4}\Big(\frac{(p+q)^2-4pqcos^2\theta+1+4pq+1}{2}\Big)\nonumber\\
                          &= \frac{1}{2}+\frac{1}{4}\Big(\frac{(p+q)^2+4pqsin^2\theta+2}{2}\Big)\nonumber\\
                          &= 1-\frac{1}{4}\Big(1-\frac{(p+q)^2+4pqsin^2\theta}{2}\Big)
                          \end{align}
We consider now that the dishonest parties can perform a local operation on their state, in order to maximize their cheating probability. Thus, the distance of the dishonest state from the correct one is : $\epsilon = min_UD((I\otimes U)\ket{\Psi},\ket{\Phi_0^n})=min_U\sqrt{1-F^2((I\otimes U)\ket{\Psi},\ket{\Phi_0^n})}$, where by $F(|\psi\rangle,|\phi\rangle)$ we denote the fidelity between two states $|\psi\rangle$ and $|\phi\rangle$. If the reduced density matrices of the honest players of the perfect and the real state are $\sigma_H$ and $\rho_H$ respectively, it holds that there exists a local operation $R$ on the dishonest state such that:
\[
F((I\otimes R)\ket{\Psi},\ket{\Phi_0^n}) = F(\sigma_H,\rho_H)
\]
By considering this operation $R$, the distance becomes:
\[
\epsilon^2 \le 1-F^2((I\otimes R)\ket{\Psi},\ket{\Phi_0^n}) = 1-F^2(\sigma_H,\rho_H)
\]
We have:
\begin{align*}
\sigma_H&=\frac{1}{2}\big(\ketbra{\Phi_0^k}{\Phi_0^k}+\ketbra{\Phi_1^k}{\Phi_1^k}    \big)\\
\rho_H   &=\\
&p\ketbra{\Phi_0^k}{\Phi_0^k}+q\ketbra{\Phi_1^k}{\Phi_1^k}+\sqrt{pq}\cos{\theta}\big(\ketbra{\Phi_0^k}{\Phi_1^k}+\ketbra{\Phi_1^k}{\Phi_0^k} \big)\\
&+(\text{additional terms that include }\ket{\mathcal{X}})
\end{align*}
\begin{align*}
 F^2(\sigma_H,\rho_H)&=tr^2(\sqrt{\sigma_H^{1/2}\rho_H\sigma_H^{1/2}})\nonumber\\
&=tr^2\Big(\sqrt{\frac{1}{2}\begin{bmatrix}
 p & \sqrt{pq}\cos\theta & 0 & \ldots & 0  \\
 \sqrt{pq}\cos\theta & q  & 0 & \ldots & 0  \\
0 & 0 &  0 & \ldots & 0  \\
\vdots & \vdots & \vdots & \ldots & \vdots
 \end{bmatrix}}\Big)
\end{align*}
Then, we diagonalize the matrix keeping only the upper left non-zero part for simplicity. The determinant of the characteristic function is:
\begin{align*}
det(tI-A)&=det \begin{bmatrix}
t-p & \sqrt{pq}\cos\theta\\
\sqrt{pq}\cos\theta  &t-q
\end{bmatrix}\\
    &=(t-p)(t-q)-pq\cos^2\theta\\
    &=t^2-t(p+q)+pq\sin^2\theta
\end{align*}
Hence, the two eigenvalues of the matrix are:
\[t_{1,2}=\frac{p+q\pm\sqrt{(p+q)^2-4pqsin^2\theta}}{2}\]
We then have:
\begin{align*}
& F^2(\sigma_H,\rho_H)=\frac{1}{4}\Big(\sqrt{p+q+\sqrt{(p+q)^2-4pqsin^2\theta}}\\
                                     &\qquad +\sqrt{p+q-\sqrt{(p+q)^2-4pqsin^2\theta}}   \Big)^2\\
     &\qquad=\frac{1}{4} \Big(2p+2q+2\sqrt{(p+q+\sqrt{(p+q)^2-4pqsin^2\theta})}\\
   &\qquad~~\cdot\sqrt{(p+q-\sqrt{(p+q)^2-4pqsin^2\theta})}\Big)\\
     &\qquad=\frac{1}{2} \big(p+q+\sqrt{(p+q)^2-(p+q)^2+4pqsin^2\theta }\big)\\
     &\qquad=\frac{1}{2} \big(p+q+2sin\theta\sqrt{pq}\big),
\end{align*}
which gives
\[
 \epsilon^2\leq 1 - \frac{p+q}{2} - \sqrt{pq}\sin(\theta)
\]
Recall that from Eq. (\ref{probability}) we have:
\begin{align*}
\Pr[T(\ket{\Psi})]&= 1-\frac{1}{4}\Big(1-\frac{(p+q)^2+4pqsin^2\theta}{2}\Big)\nonumber
\end{align*}
We use the fact that $(p+q)^2 \le (p+q)$ since $p+q \le 1$. Similarly, we use $4pq \sin^2{\theta} \le 2\sqrt{pq} \sin{\theta}$ since $2\sqrt{pq} \sin{\theta} \le 1$. From this, we conclude that:
\begin{align*}
\Pr[T(\ket{\Psi})]&= 1-\frac{1}{4}\Big(1-\frac{(p+q)^2+4pqsin^2\theta}{2}\Big)\nonumber \\
                           &\leq 1-\frac{1}{4}\Big(1-\frac{p+q}{2}-sin\theta\sqrt{pq}\Big) \leq 1-\frac{\epsilon^2}{4}
\end{align*}

\end{proof}

\noindent
{\bf Security for all honest parties}\\

Let $C_\epsilon$ be the event that the state used for the computation, which we denote by $\ket{\Psi}$, is such that $\min_U D(U\ket{\Psi},\ket{\Phi_0^n}) \geq \epsilon$, where $U$ is an operator on the space of the dishonest parties.
We prove the following:
\begin{theorem}
For all $\epsilon >0$,
$\Pr[C_\epsilon] \leq  2^{-S}\frac{4n}{k\epsilon^2}.$
\end{theorem}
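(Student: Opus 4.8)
The plan is to view the Symmetric Protocol as a sequence of rounds $t=1,2,\dots$. In round $t$ the (possibly dishonest) source distributes some state $\ket{\Psi_t}$, which may depend adaptively on the entire history of earlier rounds but --- crucially --- not on the current CRS output, since the state is distributed \emph{before} the randomness is revealed. The CRS then outputs $r_t$ uniform in $\{0,1\}^S$ and $i_t$ uniform in $[n]$, independently of everything else; if $r_t=\mathbf{0}$ the protocol halts and $\ket{\Psi_t}$ is used, otherwise party $i_t$ runs Protocol $V$ on $\ket{\Psi_t}$ and the protocol aborts on rejection or proceeds to round $t+1$ on acceptance. Call round $t$ \emph{dangerous} if it is actually reached and the state distributed in it satisfies $\min_U D(U\ket{\Psi_t},\ket{\Phi_0^n})\ge\epsilon$, and let $D$ be the random number of dangerous rounds.

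First I would observe that $C_\epsilon$ is exactly the event that some dangerous round $t$ has $r_t=\mathbf{0}$: the protocol ends with a computation precisely when some $r_t=\mathbf{0}$, and then $\ket{\Psi_t}$ is the state used. Since at most one round can have $r_t=\mathbf{0}$ (the protocol halts there), the events ``round $t$ dangerous and $r_t=\mathbf{0}$'' are disjoint in $t$; and ``round $t$ is dangerous'' is a function of the history up to and including the distribution of $\ket{\Psi_t}$, hence independent of the fresh CRS string $r_t$. Therefore
\[
\Pr[C_\epsilon]=\sum_t \Pr[\text{round }t\text{ dangerous}]\cdot 2^{-S}=2^{-S}\,\mathbb{E}[D].
\]

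It then remains to show $\mathbb{E}[D]\le 4n/(k\epsilon^2)$, and here I would invoke Theorem 2 (the dishonest-model bound) as a black box. Conditioned on being at a dangerous round, the protocol proceeds to the next round only if a test is run ($r_t\neq\mathbf{0}$) and it passes; when the randomly chosen verifier $i_t$ is one of the $k$ honest parties --- probability $k/n$ --- Theorem 2 bounds the pass probability on the $\epsilon$-far state by $1-\epsilon^2/4$, while in the worst case the test passes with probability $1$ when $i_t$ is dishonest. Hence the probability of surviving a dangerous round is at most $\frac kn\bigl(1-\frac{\epsilon^2}{4}\bigr)+\frac{n-k}{n}=1-\frac{k\epsilon^2}{4n}$, so $\Pr[D\ge m+1\mid D\ge m]\le 1-\frac{k\epsilon^2}{4n}$ for every $m$. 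This gives $\Pr[D\ge m]\le\bigl(1-\frac{k\epsilon^2}{4n}\bigr)^{m-1}$ and, summing the geometric series, $\mathbb{E}[D]=\sum_{m\ge1}\Pr[D\ge m]\le \frac{4n}{k\epsilon^2}$. Combining with the displayed identity yields $\Pr[C_\epsilon]\le 2^{-S}\,\frac{4n}{k\epsilon^2}$.

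The main obstacle is the bookkeeping around adaptivity and conditioning: one must argue carefully that ``round $t$ is dangerous'' is decided before $r_t$ and $i_t$ are drawn --- which is exactly why the protocol requires the source to distribute the state before the CRS speaks --- and that the stopping-time structure makes the dangerous-round-with-$r_t=\mathbf{0}$ events genuinely disjoint, so the union bound collapses to the equality above. Everything else is the elementary geometric estimate, once Theorem 2 supplies the single-round rejection probability.
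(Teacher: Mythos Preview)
Your proof is correct and reaches the same bound, but the decomposition is organized differently from the paper's. The paper first argues informally that the optimal dishonest strategy is to distribute an exactly $\epsilon$-far state in every round, and then sums $\Pr[C_\epsilon^l]$ over the actual round index $l$, bounding each summand by $2^{-S}(1-2^{-S})^{l-1}\bigl(1-\tfrac{k\epsilon^2}{4n}\bigr)^{l-1}$ and passing to an integral to evaluate the series. You instead count only the \emph{dangerous} rounds, derive the exact identity $\Pr[C_\epsilon]=2^{-S}\,\mathbb{E}[D]$ from the independence of the fresh CRS string $r_t$ and the history that determines whether round $t$ is dangerous, and then bound $\mathbb{E}[D]$ by a geometric tail. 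Both routes ultimately rest on Theorem~2 for the single-round rejection probability and a geometric sum, but your version handles the adaptive adversary directly without the preliminary reduction to a fixed ``optimal'' strategy, and the clean geometric series replaces the paper's integral approximation---a slightly tidier and more general argument.
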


\begin{proof}
It is not hard to see that the optimal dishonest strategy that maximizes the probability of $C_\epsilon$ is to create at all repetitions of the protocol a state $\ket{\Psi}$ such that $\min_U D(U\ket{\Psi},\ket{\Phi_0^n})=\epsilon$. In fact, if the source creates a state with smaller distance, then this does not contribute to the probability of event $C_\epsilon$; if the source creates a state with distance larger than $\epsilon$, then this increases the chances of unsuccessful cheating if the state is tested, hence decreasing the probability $C_\epsilon$.

Let $C_\epsilon^l$ be the event that the state is used for the computation at the $l$-th repetition of the protocol and is such that $\min_U D(U\ket{\Psi},\ket{\Phi_0^n}) \geq \epsilon$. This implies that in all previous repetitions the states have been tested successfully. We define the following probabilities:
\begin{description}
\item{$P_1$=} Probability that the parties receive $r=0..0$ from the CRS, which means that they use the state. This probability is equal to $2^{-S}$.
\item{$P_2$=} Probability Party $i$, who becomes the Verifier, is honest. This probability is equal to $k/n$.
\end{description}
Let us also define as VH: the event that the Verifier is Honest and as VD: the event that the verifier is dishonest. Note that for all states, $\Pr[T(\ket{\Psi}) | \text{VD}]=1$. Then,
\begin{eqnarray*}
\Pr[C_\epsilon^l]&=&P_1 (1-P_1)^{l-1}\big(1-P_2+P_2 \cdot Pr[T(\ket{\Psi})|\text{VH}]\big)^{l-1} \\
&\leq&2^{-S}(1-2^{-S})^{l-1}\Big(\frac{n-k}{n}+\frac{k}{n}\big(1-\frac{\epsilon^2}{4}\big)\Big)^{l-1}\\
&=&2^{-S}(1-2^{-S})^{l-1}\Big(1-\frac{k\epsilon^2}{4n}\Big)^{l-1}
\end{eqnarray*}
The infinite integral on $l$ provides an upper bound on the total probability $C_\epsilon$.  We have
\begin{eqnarray*}
\Pr[C_\epsilon] & \leq &  \int_0^{\infty}2^{-S}(1-2^{-S})^l\Big(1-\frac{k\epsilon^2}{4n}\Big)^ldl \\
& \leq &  2^{-S} \int_0^{\infty}\Big(1-\frac{k\epsilon^2}{4n}\Big)^ldl\\
& = & 2^{-S}  \frac{-1}{\log (1-\frac{k\epsilon^2}{4n})}\\
& \leq &2^{-S}  \frac{4n}{k\epsilon^2}
\end{eqnarray*}
\end{proof}

\end{appendix}
\end{document}